\newtheorem{theorem}{Theorem}
\newtheorem{lemma}{Lemma}
\numberwithin{equation}{section}
\theoremstyle{remark} }
\title{Well-posedness of a non-local abstract Cauchy problem with a singular integral}
\author{
Haiyan Jiang 
\thanks{School of Mathematical Sciences, Beijing Institute of Technology, Beijing 100081, China. Email: {\tt hyjiang@math.bit.edu.cn}.},
~~Tiao Lu\thanks{CAPT, HEDPS, LMAM,
IFSA Collaborative Innovation Center of MoE, \& School of Mathematical Sciences, Peking University, Beijing 100871, China. Email: {\tt tlu@math.pku.edu.cn}.},
~~Xiangjiang Zhu\thanks{School of Mathematical Sciences, Peking University, Beijing 100871, China. Email: {\tt zxj709@pku.edu.cn}.}}
\begin{document}
\maketitle

\begin{abstract}
A non-local abstract Cauchy problem with a singular integral is studied, which is a closed system of two evolution equations for a real-valued function and a function-valued function. By proposing an appropriate Banach space, the well-posedness of the evolution system is proved under some boundedness and smoothness conditions on the coefficient functions. Furthermore, an isomorphism is established to extend the result to a partial integro-differential equation with singular convolution kernel, which is a generalized form of the stationary Wigner equation. Our investigation considerably improves the understanding of the open problem concerning the well-posedness of the stationary Wigner equation with inflow boundary conditions.

\vspace*{4mm}
\noindent {\bf Keywords: Partial integro-differential equations, Singular integral, Well-posedness, Wigner equation} 
\end{abstract}

\section{Introduction}
In this paper we consider the following initial value problem for the unknown functions $c(t)$ and $v(t,x)$,
\begin{equation}
\nonumber
  \left\{
  \begin{aligned}
  	& \frac{\mathrm{d} c(t)}{\mathrm{d} t} = h(t,0) c(t) + \displaystyle{\int_{-\infty}^{+\infty} K(t,-x') v(t,x') \mathrm{d}
	x'},\quad (t,x)\in [0,T]\times \mathbb{R}\\
	& \frac{\partial v(t,x)}{\partial t} = \frac{h(t,x)-h(t,0)}{x}c(t)
  + \\
  & \qquad\qquad +\int_{-\infty}^{+\infty} \displaystyle{\frac{ K(t,x-x') - K (t, -x')}{x}v(t,x')
  \mathrm{d} x '},\quad (t,x)\in [0,T]\times \mathbb{R} \\
  & c(0) = c_0\in \mathbb{R},\quad v(0,x) = v_0(x),\quad x\in \mathbb{R},
  \end{aligned}
  \right.
\end{equation}
where $h(t,x)$ and $K(t,x)$ are given real-valued functions. It can be rewritten as an abstract Cauchy problem 
\begin{equation}  \label{equ:1}
  \left\{
  \begin{aligned}
   &
  \frac{\mathrm{d}}{\mathrm{d}t}
  \begin{bmatrix}
    c(t) \\
    v(t)
  \end{bmatrix}=B(t)
  \begin{bmatrix}
    c(t) \\
    v(t)
  \end{bmatrix},\quad t\in [0,T] \\
  & 
  \begin{bmatrix}
    c(0) \\
    v(0)
  \end{bmatrix}=
  \begin{bmatrix}
    c_0 \\
    v_0
  \end{bmatrix},
  \end{aligned}
  \right.
\end{equation} 
where $v(t,x)$ is viewed as a vector-valued function of $t$, i.e., $v(t)=v(t,\cdot)$, and $B(t)$ is a linear operator,
\begin{equation}  \label{equ:2}
	B(t)  \begin{bmatrix}
	c \\ v \end{bmatrix}
  =  \begin{bmatrix}
	c h(t,0) + \displaystyle{\int_{-\infty}^{+\infty} K(t,-x)v(x)\mathrm{d} x} \\
	c \displaystyle{\frac{h(t,x)-h(t,0)}{x}} + \displaystyle{\int_{-\infty}^{+\infty}\frac{
  K(t,x-x')-K(t,-x')}{x}v(x')\mathrm{d} x' } \end{bmatrix}. 
\end{equation}
We will put forward an appropriate Banach space $X$ (See Section 2), on which $B(t)$ is a bounded linear operator under some boundedness and smoothness assumptions on $h$ and $K$. Therefore, the well-posedness of the abstract Cauchy problem \eqref{equ:1} is proved using the semigroup theory of linear evolution systems.

Meanwhile, the well-posedness result of \eqref{equ:1} is applied to initial value problem of the following partial integro-differential equation (PIDE),
\begin{equation}  \label{equ:4}
  \left\{
   \begin{aligned}
   \frac{\partial u(t,x)}{\partial t}&=\frac{1}{x}\Psi[K]u(t,x),\quad t\in [0,T], \\
   u(0,x)&=u_0(x)=\frac{c_0}{x} + v_0(x), \\
   \end{aligned}
   \right.
\end{equation}
where $\Psi[K]$ is the convolution operator with kernel $K$,
\begin{equation} 
\nonumber
  \Psi[K]u(t,x) = \int_{-\infty}^{+\infty} K(t,x-x')u(t,x')\mathrm{d} x'.
\end{equation}
The relation of problems \eqref{equ:1} and \eqref{equ:4} will be revealed and an isomorphism between their solutions is established (see Section 3). In this way, the well-posedness of \eqref{equ:4} can also be obtained applying previous analysis.

The PIDE appeared in \eqref{equ:4} is a generalized form of the stationary Wigner 
transport equation \cite{MRS90, Wigner1932}, which is a popular tool in the quantum transport simulation (especially in the nano semiconductor simulation). The one-dimensional stationary Wigner equation can be written as
\begin{equation} \label{eq:Wigner}
v \frac{\partial f(x,v)}{\partial x}  = \int_{-\infty}^{+\infty} V_{w}(x,v-v') f ( x, v')\mathrm{d} v',  
\end{equation}
where $f(x,v)$ is the quasi-probability density function in the phase
space $(x,v)$, and the Wigner potential $V_{w}(x,v)$ is related to 
the potential $V(x)$ through 
\begin{equation} \label{eq:Vw}
V_{w}(x,v) = \frac{\mathrm{i}}{2\pi} \int_{-\infty}^{+\infty} \mathrm{e}^{\mathrm{i} v y } \left[V(x+y/2)-V(x-y/2) \right] \mathrm{d} y .
\end{equation}
Eq. \eqref{eq:Wigner} with inflow boundary conditions
\begin{equation} \label{equ:6}
  f(0,v)=f_0(v)\text{ }(v>0),\quad f(1,v)=f_1(v)\text{ }(v<0)
\end{equation}
is widely used to obtain the current-voltage curve that is an important characteristic of semiconductor devices \cite{Frensley1987,Fre90,Tang2008}. 
However, the well-posedness of this inflow boundary value problem is still an open problem which has attracted the attention of many mathematicians and is only partly solved in \cite{ALZ00,Zweifel2001,Barletti2001,LiLuSun2014,LiLuSun2017}. One big issue is whether $L^2(\mathbb{R})$ is a suitable solution space for \eqref{eq:Wigner}. The well-posedness results previously explored enable us to some extent to investigate the stationary Wigner equation with inflow boundary conditions, which will be illustrated at the end of the paper. 

\section{Well-posedness of the Cauchy problem}

In this section, the well-posedness of the Cauchy problem \eqref{equ:1} is studied. For ease of understanding, we first state some results for the general evolution system \eqref{equ:7} (see e.g. Refs \cite{Pazy1983} and \cite{Fattorini2009}). Then we propose a proper solution space for the Cauchy problem \eqref{equ:1} and verify that the conditions required by these general results are fulfilled if sufficient smoothness and boundedness of the coefficient functions are assumed. In this way, the well-posedness of \eqref{equ:1} is proved as an application of the semigroup theory of linear operators

Let $X$ be a Banach space and $A(t)$: $D(A(t))\subset X\to X$ be a linear operator in X, $\forall t\in T$. Consider the initial value problem:
\begin{equation}  \label{equ:7}
  \left \{
  \begin{aligned}
    & \frac{\mathrm{d} u(t)}{\mathrm{d} t}=A(t)u(t),\quad 0\leq s< t\leq T, \\
    & u(s)=x\in X.
  \end{aligned}
  \right.
\end{equation}
A solution $u$ of \eqref{equ:7} is called a classical solution if $u\in C([0,T];X) \cap C^1( (0,T];X)$. Moreover, let $U(t,s)$ be the solution operator, which is a two-parameter family satisfying $u(t)=U(t,s)u(s)$. Then we have the following theorems.  
\begin{theorem}  \label{thm:1}
Let $X$ be a Banach space and let
$A(t)$ be a bounded linear operator on $X$ for every $t\in[0,T]$. If the function $t\to A(t)$ is continuous in the uniform operator topology, then the initial value problem \eqref{equ:7} has a unique
classical solution $u$.
\end{theorem}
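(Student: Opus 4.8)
The plan is to construct the solution by the method of successive approximations (Picard iteration), which is the cleanest route when $A(t)$ is bounded. First I would record the key consequence of the hypothesis: since $t\mapsto A(t)$ is continuous from the compact interval $[0,T]$ into the space $B(X)$ of bounded operators, its norm is bounded, so that $M:=\sup_{t\in[0,T]}\|A(t)\|_{B(X)}<\infty$. I would then recast the initial value problem \eqref{equ:7} as the equivalent Volterra integral equation
\begin{equation}
\nonumber
u(t) = x + \int_s^t A(\tau)u(\tau)\,\mathrm{d}\tau ,
\end{equation}
noting that a continuous $X$-valued solution of this equation automatically has continuous derivative $A(t)u(t)$ and is therefore a classical solution of \eqref{equ:7}.

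Next I would set up the iterates $u_0(t)\equiv x$ and $u_{n+1}(t)=x+\int_s^t A(\tau)u_n(\tau)\,\mathrm{d}\tau$, each of which is continuous because it is the integral of a continuous integrand. An induction on $n$ gives the factorial decay estimate
\begin{equation}
\nonumber
\|u_{n+1}(t) - u_n(t)\| \le \frac{\big(M(t-s)\big)^{n+1}}{(n+1)!}\,\|x\| ,
\end{equation}
so the telescoping series $\sum_n (u_{n+1}-u_n)$ converges absolutely and uniformly on $[s,T]$. The uniform limit $u$ is therefore continuous, and passing to the limit under the integral sign (justified by $\|A(\tau)(u_n(\tau)-u(\tau))\|\le M\|u_n-u\|_\infty$) shows that $u$ satisfies the integral equation, which yields existence of a classical solution.

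For uniqueness I would take two solutions $u$ and $\tilde u$, observe that their difference $w=u-\tilde u$ satisfies $w(t)=\int_s^t A(\tau)w(\tau)\,\mathrm{d}\tau$ and hence $\|w(t)\|\le M\int_s^t\|w(\tau)\|\,\mathrm{d}\tau$, and conclude $w\equiv 0$ by Gronwall's inequality. As an alternative I could build the two-parameter solution operator $U(t,s)$ directly as the Peano--Baker (Dyson) series $U(t,s)=I+\sum_{n\ge1}\int_s^t\!\int_s^{t_1}\!\cdots A(t_1)\cdots A(t_n)\,\mathrm{d}t_n\cdots\mathrm{d}t_1$, whose convergence in operator norm follows from the same bound; this has the advantage of exhibiting $U(t,s)$ explicitly, which is what Section 3 will need.

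Since the operators are bounded the analysis is entirely elementary; the points demanding care are the uniform bound $M<\infty$ and the continuity of the operator-valued integrand $\tau\mapsto A(\tau)u(\tau)$, both of which flow from the assumed continuity in the uniform operator topology and make every integral above a well-defined, norm-convergent one. I must also verify that the uniformly continuous limit $u$ genuinely inherits the $C^1$ regularity on $(0,T]$ required by the definition of a classical solution. I expect these bookkeeping steps, rather than any single deep idea, to be the main obstacle.
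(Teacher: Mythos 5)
Your proof is correct and complete: the reduction to the Volterra integral equation, the Picard iterates with the factorial estimate $\|u_{n+1}(t)-u_n(t)\|\le (M(t-s))^{n+1}\|x\|/(n+1)!$, and the Gronwall argument for uniqueness are all sound, and the Peano--Baker series you mention is exactly the construction of $U(t,s)$ needed for Theorem~\ref{thm:2}. The paper itself does not prove this statement but quotes it from the cited references (Pazy; Fattorini), and your argument is precisely the standard proof given there, so there is nothing to flag beyond the cosmetic point that the classical-solution regularity should be stated on $[s,T]$ rather than $[0,T]$.
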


\begin{theorem}  \label{thm:2}
Suppose that the conditions in Theorem \ref{thm:1} are satisfied.
Then $U(t, s)$ is bounded and continuous in the uniform operator topology for $0\leq s<t\leq T$. Moreover, 
\begin{equation}
  \|U(t, s)\| \leq \exp\left(\int_s^t \|A(\tau)\|\mathrm{d} \tau\right).
\end{equation}
\end{theorem}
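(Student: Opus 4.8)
The plan is to build everything on the operator-valued integral equation that the solution family $U(t,s)$ inherits from the construction behind Theorem~\ref{thm:1}. Since $A(t)$ is bounded and $t\mapsto A(t)$ is continuous in the uniform operator topology on the compact interval $[0,T]$, the constant $M:=\max_{0\le\tau\le T}\|A(\tau)\|$ is finite. Integrating \eqref{equ:7} and using that the (unique, linear-in-data) solution is $u(t)=U(t,s)u(s)$, one gets the Volterra equation
\[
  U(t,s)=I+\int_s^t A(\tau)\,U(\tau,s)\,\mathrm{d}\tau,\qquad U(s,s)=I,
\]
equivalently the convergent Dyson-type series $U(t,s)=I+\sum_{n\ge1}\int_{s\le\tau_n\le\cdots\le\tau_1\le t}A(\tau_1)\cdots A(\tau_n)\,\mathrm{d}\tau_n\cdots\mathrm{d}\tau_1$. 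All three assertions then follow from this representation.

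For the norm estimate I would take norms in the Volterra equation and use submultiplicativity to obtain $\|U(t,s)\|\le 1+\int_s^t\|A(\tau)\|\,\|U(\tau,s)\|\,\mathrm{d}\tau$. With $\phi(t):=\|U(t,s)\|$ this is exactly the integral inequality in Gronwall's lemma, which delivers the claimed bound $\|U(t,s)\|\le\exp\!\left(\int_s^t\|A(\tau)\|\,\mathrm{d}\tau\right)$; the same bound drops out of the Dyson series, since the symmetric integrand $\|A(\tau_1)\|\cdots\|A(\tau_n)\|$ over the simplex integrates to $\tfrac1{n!}\big(\int_s^t\|A(\tau)\|\,\mathrm{d}\tau\big)^n$. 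Boundedness of each $U(t,s)$ is then immediate from $\int_s^t\|A(\tau)\|\,\mathrm{d}\tau\le MT$.

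Continuity in the uniform operator topology I would prove in each variable and then combine. In $t$, subtracting two instances of the Volterra equation gives $U(t_2,s)-U(t_1,s)=\int_{t_1}^{t_2}A(\tau)\,U(\tau,s)\,\mathrm{d}\tau$, so $\|U(t_2,s)-U(t_1,s)\|\le M\mathrm{e}^{MT}|t_2-t_1|$, i.e.\ Lipschitz in $t$ uniformly in $s$. In $s$, I would first record the evolution identity $U(t,r)=U(t,s)U(s,r)$ for $r\le s\le t$, which follows from the uniqueness part of Theorem~\ref{thm:1}: both $\tau\mapsto U(\tau,r)x$ and $\tau\mapsto U(\tau,s)U(s,r)x$ solve \eqref{equ:7} on $[s,T]$ with the common initial value $U(s,r)x$ at $\tau=s$. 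Writing $U(t,s_1)-U(t,s_2)=U(t,s_2)\big[U(s_2,s_1)-I\big]$ for $s_1\le s_2$ and bounding $\|U(s_2,s_1)-I\|\le M\mathrm{e}^{MT}|s_2-s_1|$ via the Volterra equation then gives Lipschitz continuity in $s$. A triangle-inequality estimate merges the two into joint continuity of $(t,s)\mapsto U(t,s)$.

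Most of this is mechanical once the Volterra equation is in hand: the Gronwall bound, boundedness, and $t$-continuity are immediate. The one step that requires genuine care is the evolution identity $U(t,r)=U(t,s)U(s,r)$, on which $s$-continuity rests; establishing it cleanly hinges on the uniqueness half of Theorem~\ref{thm:1} and on noting that $U(s,r)x\in X$ is an admissible initial datum, both of which are guaranteed by the standing assumptions that $A(\cdot)$ is bounded and uniformly continuous on $[0,T]$. I do not expect a deep difficulty here---only the bookkeeping needed to make the two-parameter family behave like a propagator.
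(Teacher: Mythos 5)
Your argument is correct and is essentially the standard textbook proof (Pazy, Ch.~5, Theorems~1.1--1.2): the paper itself offers no proof of Theorem~\ref{thm:2}, quoting it as a known result from \cite{Pazy1983,Fattorini2009}, and those references establish it exactly as you do --- via the Volterra equation $U(t,s)=I+\int_s^t A(\tau)U(\tau,s)\,\mathrm{d}\tau$ and its Dyson series, with the Gronwall/simplex-volume bound for the norm estimate and the propagator identity $U(t,r)=U(t,s)U(s,r)$ (from uniqueness) for continuity in $s$. The only point worth tightening is that, before $U(\cdot,s)$ is known to be norm-continuous, the operator-valued integral in the Volterra equation should be read in the strong sense (i.e.\ applied to each $x\in X$), which is all your estimates actually use; with that reading the argument is complete.
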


Evidently, Theorems \ref{thm:1} and \ref{thm:2} describe the well-posedness of evolution system \eqref{equ:7} conclusively. Hence, the issue is to define a proper Banach space, say $X$, for the system \eqref{equ:1}, such that the operator $B(t)$ defined by \eqref{equ:2} satisfies all the conditions referring to $A(t)$ in Theorem \ref{thm:1}. In other words,
$B(t)$ is bounded on $X$ for every $t\in \mathbb{R}$ and is
continuous, as an operator-valued function of $t$, in the uniform operator
topology. 

In this paper, we assume that $X=\mathbb{R}\oplus L^2(\mathbb{R})$, where $(c,v)\in X$ if and only if $c\in \mathbb{R}$ and $v\in L^2(\mathbb{R})$, and the norm of $X$ is naturally defined by $\|(c,v)\|_X=\max(|c|,\|v\|_{L^2})$. For convenience of further discussion, we rewrite the problem \eqref{equ:1} in the following form
\begin{equation}  \label{equ:8}
  \left\{
  \begin{aligned}
   &
  \frac{\mathrm{d}}{\mathrm{d}t}
  \begin{bmatrix}
    c(t) \\
    v(t)
  \end{bmatrix}=
  \begin{bmatrix}
    B_1(t) & B_3(t) \\
    B_2(t) & B_4(t)
  \end{bmatrix}
  \begin{bmatrix}
    c(t) \\
    v(t)
  \end{bmatrix},\quad t\in [0,T], \\
  & 
  \begin{bmatrix}
    c(0) \\
    v(0)
  \end{bmatrix}=
  \begin{bmatrix}
    c_0 \\
    v_0
  \end{bmatrix}\in X,
  \end{aligned}
  \right.
\end{equation}
where
\begin{align}
  \nonumber
  & B_1(t):\mathbb{R}\rightarrow \mathbb{R},\quad c\mapsto ch(t,0), \\
  \nonumber
  & B_2(t):\mathbb{R}\rightarrow L^2(\mathbb{R}),\quad c\mapsto 
  c\cdot\frac{h(t,x)-h(t,0)}{x}, \\
  \nonumber  
  & B_3(t):L^2(\mathbb{R})\rightarrow \mathbb{R},\quad w(x)\mapsto
  \int_{-\infty}^{+\infty} K(t,-x)w(x)\mathrm{d}x, 
\end{align}
and $B_4(t):L^2(\mathbb{R})\rightarrow L^2(\mathbb{R})$,
\begin{equation}  \label{equ:9}
  B_4(t)w(x)=\int_{-\infty}^{+\infty}\frac{K(t,x-x')-K(t,-x')}{x}
  w(x')\mathrm{d} x'.
\end{equation}
In order to study the operator $B(t)$, we will first prove that for every $t\in [0,T]$, $B_i(t)$ ($i=1,2,3,4$) is bounded and continuous in the uniform operator topology under some smoothness and boundedness assumptions on the coefficient functions $h(t,x)$ and $K(t,x)$.

We begin with the boundedness discussion of $B_i(t)$. Note that $B_1(t):c\mapsto ch(t,0)$ solely depends on $h(t,0)$ and is thus bounded as long as $h(t,0)$ is well defined. (This is generally not true in a measurable function since changing the value of a single point will not alter the function itself.) To achieve this, we assume that $h(t,\cdot)$ is continuous at a neighborhood of zero point. Moreover, we assume $K(t,\cdot)\in L^2(\mathbb{R})$ for every $t\in[0,T]$. By the Cauchy-Schwarz inequality, we have
\begin{equation} \label{equ:9.5}
\int_{-\infty}^{+\infty} K(t,-x)w(x)\mathrm{d}x \leq \|\widetilde{K}(t,\cdot)\|_{L^2}\cdot \|w(\cdot)\|_{L^2}=\|{K}(t,\cdot)\|_{L^2}\cdot \|w(\cdot)\|_{L^2},
\end{equation}
where $\widetilde{K}(t,x)=K(t,-x)$. Hence $B_3(t)$ is also bounded and $\|B_3(t)\|\leq\|K(t,\cdot)\|_{L^2}$. For simplicity, in what follows we will express $B_3(t)$ in terms of the inner product,
\begin{equation} 
\nonumber
  B_3(t):w(x)\mapsto\int_{-\infty}^{+\infty} K(t,-x)w(x)\mathrm{d}x=\langle \widetilde{K}(t,\cdot),w(\cdot)\rangle.
\end{equation} 
The boundedness of $B_2(t)$ and $B_4(t)$ is established by the following lemmas.
\begin{lemma}  \label{lem:1}
  Suppose $w(x)$ is Lipschitz continuous
  with a Lipschitz constant $M_1$ and is bounded in terms of $\|w(
  \cdot)\|_\infty\leq M_2$. Then $\frac{w(x)-w(0)}{x}\in 
  L^2(\mathbb{R})$ and
\begin{equation}
  \nonumber
  \left \|\frac{w(x)-w(0)}{x}\right \|_{L^2}\leq \sqrt{2M_1^2+
  8M_2^2}.
\end{equation}
\end{lemma}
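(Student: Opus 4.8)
The plan is to show that $g(x) = \frac{w(x)-w(0)}{x}$ is square-integrable by splitting the real line at $|x|=1$ and estimating the two pieces with the two hypotheses separately. First I would treat the region near the origin, where the only concern is the apparent singularity at $x=0$. The Lipschitz condition controls it directly: for $|x|\le 1$ we have $|w(x)-w(0)|\le M_1|x|$, hence $|g(x)|\le M_1$ pointwise, and therefore
\begin{equation}
\nonumber
\int_{-1}^{1}|g(x)|^2\dd x \le \int_{-1}^{1}M_1^2\dd x = 2M_1^2.
\end{equation}
This estimate simultaneously confirms that $g$ has no genuine singularity at the origin, so the quantity $\|g\|_{L^2}$ is meaningful.

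Next I would treat the tail $|x|>1$, where the quotient is controlled not by smoothness but by size. Using $|w(x)-w(0)|\le \|w\|_\infty + |w(0)| \le 2M_2$ gives $|g(x)|\le 2M_2/|x|$, and since $1/x^2$ is integrable away from the origin, this yields
\begin{equation}
\nonumber
\int_{|x|>1}|g(x)|^2\dd x \le 4M_2^2\int_{|x|>1}\frac{\dd x}{x^2} = 8M_2^2.
\end{equation}
Adding the two contributions gives $\|g\|_{L^2}^2\le 2M_1^2+8M_2^2$, which is exactly the claimed bound after taking square roots.

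I do not expect any real obstacle in this argument; the only point that requires a moment's care is the choice of cutoff. Splitting precisely at $|x|=1$ is what produces the clean constants $2M_1^2$ and $8M_2^2$: it is the radius at which the Lipschitz estimate $M_1|x|$ for the numerator and the boundedness estimate $2M_2$ hand off control of $|w(x)-w(0)|$, so that each hypothesis is applied exactly in the region where it is sharp. A different cutoff would still give a finite bound, but not this particular tidy one, so matching the stated constant is really a matter of making the optimal split rather than overcoming any analytic difficulty.
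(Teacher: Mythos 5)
Your proof is correct and follows exactly the same route as the paper: split at $|x|=1$, use the Lipschitz bound $|g(x)|\le M_1$ on $|x|\le 1$ to get $2M_1^2$, and use $|w(x)-w(0)|\le 2M_2$ with $\int_{|x|>1}x^{-2}\dd x=2$ to get $8M_2^2$ on the tail. No differences worth noting.
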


\begin{proof}
\begin{equation}
  \nonumber
  \begin{split}
    \left \|\frac{w(x)-w(0)}{x}\right \|_{L^2}^2&=\int_\mathbb{R}\left |\frac{w(x)-w(0)}{x}\right |^2\mathrm{d} x \\
    &=\int_{|x|\leq 1}\left |\frac{w(x)-w(0)}{x}\right |^2\mathrm{d} x+\int_{|x|> 1}\left |\frac{w(x)-w(0)}{x}\right |^2\mathrm{d} x \\
    &\leq \int_{|x|\leq 1}|M_1|^2\mathrm{d} x+\int_{|x|> 1}\left |\frac{2M_2}{x}\right |^2\mathrm{d} x \\
    &=2M_1^2+8M_2^2,
  \end{split}
\end{equation}
\end{proof}

The boundedness of $B_2(t)$ then follows by setting $w(\cdot)=h(t,\cdot)$ in Lemma \ref{lem:1}. To proceed, let $D^1 w$ denote the first-order weak derivative with respect to the variable $x$, where $w$ can be a real function of one or two variables, i.e., $w=w(x)$ or $w=w(t,x)$.  

\begin{lemma}  \label{lem:2}
Suppose $K(t,\cdot)\in H^{1}(\mathbb{R})=\{f(v)\in L^2(\mathbb{R}):D^1f\in L^2(\mathbb{R})\}$. Then $B_4(t)$ is a bounded linear operator on $L^2(\mathbb{R})$ and the corresponding operator norm is controlled by
\begin{equation}
  \nonumber
  \|B_4(t)\|\leq 2\sqrt{2}\|K(t,\cdot)\|_{H^1}.
\end{equation}
\end{lemma}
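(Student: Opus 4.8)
The plan is to recognize $B_4(t)$ as an integral operator with kernel
\[
  G(x,x') = \frac{K(t,x-x')-K(t,-x')}{x},
\]
and to bound its operator norm by its Hilbert--Schmidt norm, i.e. to show $\|B_4(t)\| \le \|G\|_{L^2(\mathbb{R}^2)}$ and then estimate $\iint_{\mathbb{R}^2}|G(x,x')|^2\,\mathrm{d}x'\,\mathrm{d}x$ directly. Mirroring the device used in Lemma~\ref{lem:1}, I would split the outer integral into the regions $|x|\le 1$ and $|x|>1$ and treat them with two genuinely different estimates. This split is, I expect, the crux of the argument: on the whole line a single naive bound fails, because the estimate available near the origin is uniform in $x$ and hence not integrable at infinity, while the estimate available at infinity degenerates near the origin.

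On $\{|x|\le 1\}$ I would exploit the fact that $K(t,\cdot)\in H^1(\mathbb{R})$ is (after modification) absolutely continuous, so that by the fundamental theorem of calculus
\[
  K(t,x-x')-K(t,-x') = \int_0^x D^1 K(t,\,r-x')\,\mathrm{d}r .
\]
Dividing by $x$ and applying the Cauchy--Schwarz inequality in the variable $r$ over the interval between $0$ and $x$ gives
\[
  |G(x,x')|^2 \le \frac{1}{|x|}\left|\int_0^x |D^1 K(t,\,r-x')|^2\,\mathrm{d}r\right| .
\]
Integrating in $x'$, interchanging the order of integration (Fubini) and using translation invariance of the $L^2$ norm, the inner $x'$-integral is bounded by $\|D^1 K(t,\cdot)\|_{L^2}^2$ for each fixed $x$; integrating the resulting constant over $|x|\le 1$ yields the contribution $2\|D^1 K(t,\cdot)\|_{L^2}^2$.

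On $\{|x|>1\}$ I would instead use the triangle inequality $|G(x,x')|\le (|K(t,x-x')|+|K(t,-x')|)/|x|$, together with $(a+b)^2\le 2a^2+2b^2$ and translation invariance, to obtain $\int_{\mathbb{R}}|G(x,x')|^2\,\mathrm{d}x' \le (4/x^2)\,\|K(t,\cdot)\|_{L^2}^2$. Since $\int_{|x|>1} x^{-2}\,\mathrm{d}x = 2$, this region contributes $8\|K(t,\cdot)\|_{L^2}^2$. Combining the two pieces gives
\[
  \|B_4(t)\|^2 \le \|G\|_{L^2(\mathbb{R}^2)}^2 \le 2\|D^1 K(t,\cdot)\|_{L^2}^2 + 8\|K(t,\cdot)\|_{L^2}^2 \le 8\,\|K(t,\cdot)\|_{H^1}^2 ,
\]
whence $\|B_4(t)\| \le 2\sqrt{2}\,\|K(t,\cdot)\|_{H^1}$, as claimed. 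The only points requiring care beyond routine computation are the justification of the integral representation via absolute continuity of one-dimensional $H^1$ functions and the Fubini interchange, both of which are standard once the region splitting is in place.
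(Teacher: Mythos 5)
Your proof is correct, but it takes a genuinely different route from the paper's. The paper never treats the kernel $G(x,x')$ as an element of $L^2(\mathbb{R}^2)$: instead it fixes $u$, writes $B_4(t)u=\bigl[(K*u)(t,x)-(K*u)(t,0)\bigr]/x$, proves that $K*u$ is Lipschitz with constant $\|D^1K(t,\cdot)\|_{L^2}\|u\|_{L^2}$ (via the same fundamental-theorem-of-calculus representation, Fubini, and Cauchy--Schwarz that you use) and bounded by $\|K(t,\cdot)\|_{L^2}\|u\|_{L^2}$, and then invokes Lemma~\ref{lem:1} --- which already contains the $|x|\le 1$ versus $|x|>1$ split --- to conclude. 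So the two arguments use identical ingredients (the weak derivative near the origin, the plain $L^2$ bound at infinity, the same split, the same constants $2$ and $8$) but organize them differently: the paper estimates the image function $B_4(t)u$ for each $u$, while you estimate the kernel once and for all in the Hilbert--Schmidt norm. Your version buys something the paper's does not: since $\|G\|_{L^2(\mathbb{R}^2)}^2\le 2\|D^1K(t,\cdot)\|_{L^2}^2+8\|K(t,\cdot)\|_{L^2}^2<\infty$, you show that $B_4(t)$ is in fact a Hilbert--Schmidt, hence compact, operator, which is strictly stronger than boundedness. The paper's route has the advantage of reusing Lemma~\ref{lem:1} verbatim and of exhibiting $B_4(t)u$ as a difference quotient of a Lipschitz function, which keeps the whole section running on a single elementary lemma. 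The points you flag as needing care (absolute continuity of the $H^1$ representative, Tonelli for the nonnegative integrands, and the elementary bound $\|T\|\le\|G\|_{L^2(\mathbb{R}^2)}$ obtained by two applications of Cauchy--Schwarz) are all standard and close without difficulty.
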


\begin{proof}
  Note that
\begin{equation}
  \nonumber
  B_4(t)u(t,x)=\frac{(K*u)(t,x)-(K*u)(t,0)}{x}.
\end{equation}
Thus, in light of Lemma \ref{lem:1}, we will investigate the Lipschitz continuity and boundedness of the function $(K*u)(t,\cdot)$. Using the Fubini theorem and the Cauchy-Schwarz inequality, we obtain
\begin{equation}
  \nonumber
  \begin{split}
    |(K*u)(t,x_2) & -(K*u)(t,x_1)| \\
    &=\left |\int_{-\infty}^{+\infty}[K(t,x_2-x')-K(t,x_1-x')]u(t,x')\mathrm{d} x'\right |\\
    &=\left |\int_{-\infty}^{+\infty}\left [\int_{x_1}^{x_2}D^1 K(t,\bar{x}-x')\mathrm{d} \bar{x}\right ]u(t,x')\mathrm{d} x'\right |\\
    &\leq\int_{-\infty}^{+\infty}\int_{x_1}^{x_2}\left |D^1 K(t,\bar{x}-x')\right |\cdot\left |u(t,x')\right |\mathrm{d} \bar{x}\mathrm{d} x'\\
    &=\int_{x_1}^{x_2}\left [\int_{-\infty}^{+\infty}\left |D^1 K(t,\bar{x}-x')\right |\cdot|u(t,x')|\mathrm{d} x'\right ]\mathrm{d}\bar{x}\\
    &\leq\int_{x_1}^{x_2}\left \| D^1 K(t,\bar{x}-\cdot)\right \|_{L^2}\cdot \left \|u(t,\cdot)\right \|_{L^2}\mathrm{d} \bar{x}\\
    &=\left \| D^1 K(t,\cdot)\right \|_{L^2}\cdot \left \|u(t,\cdot)\right \|_{L^2}\cdot|x_2-x_1|. 
  \end{split}
\end{equation}
On the other hand,
\begin{equation}
  \nonumber
  \begin{split}
    |(K*u)(t,x)|&\leq\int_{\mathbb{R}}|K(t,x-x')u(t,x')|\mathrm{d} x'\\
    &\leq\left (\int_\mathbb{R}|K(t,x-x')|^2\mathrm{d} x'\right )^{\frac{1}{2}}\cdot \|u(t,\cdot)\|_{L^2} \\
    &=\|K(t,\cdot)\|_{L^2} \|u(t,\cdot)\|_{L^2}.
  \end{split}
\end{equation}
Since $K(t,\cdot)\in H^{1}(\mathbb{R})$, replacing $w(\cdot)$ by $(K*u)(t,\cdot)$ in Lemma \ref{lem:1}, we conclude that $B_4(t)u(t,\cdot)\in L^2(\mathbb{R})$ and
\begin{equation}
  \nonumber
  \begin{split}
    \|B_4(t)u(t,\cdot)\|_{L^2} &=\left \|\frac{(K*u)(t,x)-(K*u)(t,0)}{x}\right \|_{L^2}\\
    &\leq \sqrt{2\|D^1K(t,\cdot)\|_{L^2}^2\|u(t,\cdot)\|_{L^2}^2 +8\|K(t,\cdot)\|_{L^2}^2\|u(t,\cdot)\|_{L^2}^2\cdot} \\
    &\leq 2\sqrt{2}\|K(t,\cdot)\|_{H^1} \|u(t,\cdot)\|_{L^2},
  \end{split}
\end{equation}
which demonstrates that $B_4(t)$ is a bounded linear operator on $L^2(\mathbb{R})$ and the operator norm is controlled by $2\sqrt{2}\|K(t,\cdot)\|_{H^1}$.
\end{proof}

Now we consider the continuity of $B_i(t)$ in the uniform operator topology (we will simply refer to continuity below). For sake of brevity, the notation $w'(t,x)$ is used to denote the partial derivative $\partial w(t,x)/\partial t$, where $w(t,x)$ is any two variable function defined on $[0,T]\times\mathbb{R}$. From our point of view, $w$ is considered as a vector-valued function with respect to $t$ and the notation is thus similar to that of the derivative in a real-valued function. In what follows, we assume that
\vspace{3mm}
\newline
{\rm (i)} For every $x\in\mathbb{R}$, $h(\cdot,x)$ is absolutely 
continuous on the interval $[0,T]$, i.e., 
\begin{equation} 
  \nonumber
  h(t_2,x)-h(t_1,x)=\int_{t_1}^{t_2}h'(t,x)\mathrm{d} t,\quad 0\leq t_1\leq t_2\leq T.
\end{equation}
\vspace{-2mm}
\newline
{\rm (ii)} Moreover, $h'(t,\cdot)$ is uniformly bounded in $L^{\infty}(\mathbb{R})$, i.e., 
\begin{equation}
  \nonumber
  \|h'(t,\cdot)\|_\infty \leq k_1,\quad \forall t\in [0,T],
\end{equation}
and Lipschitz continuous with a uniform Lipschitz constant, i.e., there exists $k_2>0$ such that
\begin{equation}
  \nonumber
  |h'(t,x_2)-h'(t,x_1)| \leq k_2|x_2-x_1|,\quad \forall t\in [0,T].
\end{equation}
\vspace{-3mm}
\newline
Obviously, setting $x=0$ in {\rm (i)} gives the continuity of $B_1(t)$. The continuity of $B_2(t)$ can be derived via
\begin{equation}
  \nonumber
  \begin{split}
  \left \|\frac{h(t_2,x)-h(t_2,0)}{x}-\frac{h(t_1,x)-h(t_1,0)}{x}\right \|_{L^2} & =\left \|\int_{t_1}^{t_2}\frac{h'(t,x)-h'(t,0)}{x}\mathrm{d} t\right \|_{L^2} \\
  & \leq \int_{t_1}^{t_2}\left \|\frac{h'(t,x)-h'(t,0)}{x}\right \|_{L^2}\mathrm{d} t \\
  & \leq \sqrt{2k_2^2+8k_1^2}\cdot|t_2-t_1|.
  \end{split}
\end{equation}
The last inequality comes directly from Lemma \ref{lem:1}.

For the continuity of $B_3(t)$ and $B_4(t)$, we assume
\vspace{3mm}
\newline
{\rm (iii)} $K(\cdot,x)$ is absolutely continuous for $t\in[0,T]$. (Thus $K'(t,\cdot)$ exists almost everywhere in $[0,T]$.)
\vspace{3mm}
\newline
{\rm (iv)} $K'(t,\cdot)$ is uniformly bounded on $H^1(\mathbb{R})$, i.e., there exists a constant $m$ such that
\begin{equation} \label{equ:9.6}
  \|K'(t,\cdot)\|_{H^1}\leq m, \quad \forall t \in [0,T].
\end{equation}
\vspace{-3mm}
\newline
Applying {\rm (iii)} we have, for all $0\leq t_1< t_2 \leq T$,
\begin{equation}
  \nonumber
  \begin{split}
    [B_4(t_2)- & B_4(t_1)]u(x) \\ 
    & =\int_{-\infty}^{+\infty}\left [\frac{K(t_2,x-x')-K(t_2,-x')}{x}-\frac{K(t_1,x-x')-K(t_1,-x')}{x}\right ]u(x')\mathrm{d} x' \\
    & =\int_{-\infty}^{+\infty}\left [\int_{t_1}^{t_2}K'(t,x-x')-K'(t,-x') \mathrm{d} t\right ]\frac{u(x')}{x}\mathrm{d} x' \\
    & =\int_{t_1}^{t_2}\left [\int_{-\infty}^{+\infty}\frac{K'(t,x-x')-K'(t,-x')}{x}u(x')\mathrm{d} x'\right ]\mathrm{d} t
  \end{split}
\end{equation}
Define $B'_4(t):L^2(\mathbb{R})\to L^2(\mathbb{R})$, 
\begin{equation}
  \nonumber
  B'_4(t)u(x)=\int_{-\infty}^{+\infty}\frac{K'(t,x-x')-K'(t,-x')}{x}u(x')\mathrm{d} x',\quad \forall u\in L^2(\mathbb{R}).
\end{equation}
We can thus write
\begin{equation}
  \nonumber
  [B_4(t_2)- B_4(t_1)]u(x)=\int_{t_1}^{t_2}\left [B'_4(t)u(x)\right ]\mathrm{d} t.
\end{equation}
Replacing $K(t,\cdot)$ by $K'(t,\cdot)$ in \eqref{equ:9} and Lemma \ref{lem:2} and using condition {\rm (iv)} (see \eqref{equ:9.6}) we know that $B_4'(t)$ is also a bounded linear operator on $L^2(\mathbb{R})$ and
\begin{equation}
  \nonumber
  \|B'_4(t)\|\leq 2\sqrt{2}\|K'(t,\cdot)\|_{H^1}\leq 2\sqrt{2}m,\quad \forall t\in[0,T].
\end{equation}
Thus we obtain
\begin{equation}
  \nonumber
  \begin{split}
    \left \|[B_4(t_2)-B_4(t_1)]\right \| & =\sup_{u\in L^2(\mathbb{R})}\frac{\left \|\int_{t_1}^{t_2}\left [B'_4(t)u(\cdot)\right ]\mathrm{d} t\right \|_{L^2}}{\|u(\cdot)\|_{L^2}}\leq \sup_{u\in L^2(\mathbb{R})}\frac{\int_{t_1}^{t_2}\left \|B'_4(t)u(\cdot)\right \|_{L^2}\mathrm{d} t}{\|u(\cdot)\|_{L^2}} \\
    & \leq \sup_{u\in L^2(\mathbb{R})}\frac{2\sqrt{2}m\|u(\cdot)\|_{L^2}|t_2-t_1|}{\|u(\cdot)\|_{L^2}}= 2\sqrt{2}m|t_2-t_1|.
  \end{split}
\end{equation}
On the other hand, by the Cauchy-Schwarz inequality,
\begin{equation}
  \nonumber
  \begin{split}
    \|B_3(t_2)-B_3(t_1)\| & =\sup_{u\in L^2(\mathbb{R})} \frac{\left|\left\langle\widetilde{K}(t_2,\cdot)-\widetilde{K}(t_1,\cdot),u(\cdot)\right\rangle\right|}{\|u(\cdot)\|_{L^2}} \leq \left\|\widetilde{K}(t_2,\cdot)-\widetilde{K}(t_1,\cdot)\right\|_{L^2} \\
    & =\left \|\int_{t_1}^{t_2}\widetilde{K}'(t,\cdot)\mathrm{d} t\right \|_{L^2} \leq \int_{t_1}^{t_2}\left \|\widetilde{K}'(t,\cdot)\right \|_{L^2}\mathrm{d} t \\
    & \leq \|K'(t,\cdot)\|_{H_1}|t_2-t_1|\leq 2\sqrt{2}m|t_2-t_1|.
  \end{split}
\end{equation}
Hence the continuity of $B_3(t)$ and $B_4(t)$ is also proved. 

Collecting all the previous results on $B_i(t)$, we can conclude the following theorem for the Cauchy problem \eqref{equ:1}.
\begin{theorem}  \label{thm:4}
Suppose that $\forall t\in [0,T]$, $K(t,\cdot)\in H^1(\mathbb{R})$ and $h(t,\cdot)\in L^{\infty}(\mathbb{R})$ is Lipschitz continuous on $\mathbb{R}$ with the minimal Lipschitz constant $L(t)$. Moreover, suppose the assumptions {\rm (i)}-{\rm (iv)} hold. Then the Cauchy problem \eqref{equ:1} has a unique classical solution in $C([0,T];X)$ and we have the following estimation
\begin{equation}
\nonumber
  \left\|
  \begin{bmatrix}
    c(t) \\
    v(t)
  \end{bmatrix}
  \right\|_X \leq 
  \left\|
  \begin{bmatrix}
    c_0 \\
    v_0
  \end{bmatrix}
  \right\|_X
  \cdot\exp\left(\int_0^t\left[ \sqrt{2L(s)^2+8\|h(s,\cdot)\|_{L^\infty}^2}+2\sqrt{2}\|K(s,\cdot)\|_{H^1}\right]\mathrm{d} s\right).
\end{equation}
\end{theorem}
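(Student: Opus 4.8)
The plan is to verify that the block operator $B(t)=\begin{bmatrix} B_1(t) & B_3(t) \\ B_2(t) & B_4(t)\end{bmatrix}$ from \eqref{equ:8} satisfies, on the Banach space $X=\mathbb{R}\oplus L^2(\mathbb{R})$, the two hypotheses of Theorem~\ref{thm:1}: that $B(t)$ is a bounded linear operator on $X$ for every fixed $t\in[0,T]$, and that $t\mapsto B(t)$ is continuous in the uniform operator topology. Once these are in place, Theorem~\ref{thm:1} yields a unique classical solution and Theorem~\ref{thm:2} supplies the exponential bound through the solution operator $U(t,s)$, since $\bigl(c(t),v(t)\bigr)=U(t,0)(c_0,v_0)$. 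All the analytic content has already been established block by block; what remains is to transfer these facts from the individual entries $B_i(t)$ to the assembled operator $B(t)$ and to track the constants so that they match the stated estimate.

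First I would bound the operator norm of $B(t)$ in terms of the block norms. For $(c,v)\in X$ one has $B(t)(c,v)=\bigl(B_1(t)c+B_3(t)v,\ B_2(t)c+B_4(t)v\bigr)$, so that, applying the triangle inequality in each component together with $|c|\le\|(c,v)\|_X$ and $\|v\|_{L^2}\le\|(c,v)\|_X$,
\begin{equation}
\nonumber
\|B(t)\|\le\max\bigl(\|B_1(t)\|+\|B_3(t)\|,\ \|B_2(t)\|+\|B_4(t)\|\bigr).
\end{equation}
Substituting the four block bounds obtained above --- namely $\|B_1(t)\|\le\|h(t,\cdot)\|_{L^\infty}$, $\|B_3(t)\|\le\|K(t,\cdot)\|_{L^2}$, $\|B_2(t)\|\le\sqrt{2L(t)^2+8\|h(t,\cdot)\|_{L^\infty}^2}$ (Lemma~\ref{lem:1} with $M_1=L(t)$, $M_2=\|h(t,\cdot)\|_{L^\infty}$), and $\|B_4(t)\|\le2\sqrt{2}\,\|K(t,\cdot)\|_{H^1}$ (Lemma~\ref{lem:2}) --- the second entry in the maximum is exactly $\sqrt{2L(t)^2+8\|h(t,\cdot)\|_{L^\infty}^2}+2\sqrt{2}\,\|K(t,\cdot)\|_{H^1}$. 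The small check is that the first entry is dominated by this same quantity: indeed $\|h(t,\cdot)\|_{L^\infty}\le\sqrt{8\|h(t,\cdot)\|_{L^\infty}^2}\le\sqrt{2L(t)^2+8\|h(t,\cdot)\|_{L^\infty}^2}$ and $\|K(t,\cdot)\|_{L^2}\le\|K(t,\cdot)\|_{H^1}\le2\sqrt{2}\,\|K(t,\cdot)\|_{H^1}$. Hence $\|B(t)\|\le\sqrt{2L(t)^2+8\|h(t,\cdot)\|_{L^\infty}^2}+2\sqrt{2}\,\|K(t,\cdot)\|_{H^1}$, which is precisely the integrand appearing in the claimed estimate.

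Next I would assemble the continuity of $t\mapsto B(t)$ from the four componentwise estimates derived above, each of the form $\|B_i(t_2)-B_i(t_1)\|\le C_i|t_2-t_1|$ under assumptions (i)--(iv). Applying the same max-of-row-sums inequality to the difference operator gives
\begin{equation}
\nonumber
\|B(t_2)-B(t_1)\|\le\max\bigl(\|B_1(t_2)-B_1(t_1)\|+\|B_3(t_2)-B_3(t_1)\|,\ \|B_2(t_2)-B_2(t_1)\|+\|B_4(t_2)-B_4(t_1)\|\bigr),
\end{equation}
and the four Lipschitz-in-$t$ bounds already shown make the right-hand side $O(|t_2-t_1|)$. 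In particular $t\mapsto B(t)$ is Lipschitz, hence continuous, in the uniform operator topology.

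With both hypotheses verified, Theorem~\ref{thm:1} produces a unique classical solution of \eqref{equ:1} in $C([0,T];X)\cap C^1((0,T];X)\subset C([0,T];X)$, while Theorem~\ref{thm:2} gives $\|U(t,0)\|\le\exp\bigl(\int_0^t\|B(\tau)\|\,\mathrm{d}\tau\bigr)$; combining this with $\|(c(t),v(t))\|_X\le\|U(t,0)\|\,\|(c_0,v_0)\|_X$ and the norm bound on $B(\tau)$ yields the stated inequality. The only genuinely new step --- and thus the main, though modest, obstacle --- is the passage from the block norms to $\|B(t)\|$: one must use the specific $\max$-structure of $\|\cdot\|_X$ correctly and verify that the ``first row'' contribution $\|B_1\|+\|B_3\|$ is absorbed by the larger expression $\|B_2\|+\|B_4\|$ rather than adding to it. Everything else is a direct invocation of the semigroup results and of Lemmas~\ref{lem:1}--\ref{lem:2}.
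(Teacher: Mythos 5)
Your proposal is correct and follows essentially the same route as the paper: bound $\|B(t)\|$ by the max of the row sums of the block norms from Lemmas~\ref{lem:1} and \ref{lem:2}, verify uniform-operator-topology continuity from the componentwise Lipschitz-in-$t$ estimates under (i)--(iv), and invoke Theorems~\ref{thm:1} and \ref{thm:2}. Your explicit check that the first-row sum $\|B_1\|+\|B_3\|$ is dominated by $\|B_2\|+\|B_4\|$ is in fact slightly more careful than the paper, which writes an equality at that step.
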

\begin{proof}
For simplicity we denote $v(t)=v(t,x)$ and $\mathrm{d}v(t)/\mathrm{d}t=\partial v(t,x)/\partial t$. From the discussion on boundedness (see \eqref{equ:9.5}, Lemma \ref{lem:1} and Lemma \ref{lem:2}), we know that
\begin{equation}
\nonumber
  \begin{split}
    & \|B_1(t)\|\leq |h(t,0)|,\quad \|B_2(t)\|\leq\sqrt{2L(t)^2+8\|h(t,\cdot)\|_{L^\infty}^2},\quad \\
    & \|B_3(t)\|\leq\|K(t,\cdot)\|_{L^2},\quad \|B_4(t)\|\leq 2\sqrt{2}\|K(t,\cdot)\|_{H^1},
  \end{split}
\end{equation}
where $\|\cdot\|$ denotes the corresponding operator norm, respectively. Substituting it into \eqref{equ:8} we obtain (note that $|h(t,0)|\leq\|h(t,\cdot)\|_{L^\infty}$ and $K(t,\cdot)\|_{L^2}\leq \|K(t,\cdot)\|_{H^1}$)
\begin{equation}
\nonumber
  \begin{split}
  \left\|B(t)
  \begin{bmatrix}
    c(t) \\
    v(t)
  \end{bmatrix}
  \right\|_X & =\max\left(|B_1(t)c(t)+B_3(t)v(t)|,\|B_2(t)c(t)+B_4(t)v(t)\|_{L^2}\right) \\
  & \leq \max(\|B_1(t)\|+\|B_3(t)\|,\|B_2(t)\|+\|B_4(t)\|)\cdot\max(|c(t)|,\|v(t)\|_{L^2}) \\
  & = \left[ \sqrt{2L(t)^2+8\|h(t,\cdot)\|_{L^\infty}^2}+2\sqrt{2}\|K(t,\cdot)\|_{H^1}\right]\cdot
  \left\|
  \begin{bmatrix}
    c(t) \\
    v(t)
  \end{bmatrix}
  \right\|_X.
  \end{split}
\end{equation}
Thus we see that $B(t)$ is bounded on $X$ for every $t\in[0,T]$ and
\begin{equation}
\nonumber
  \|B(t)\|\leq \sqrt{2L(t)^2+8\|h(t,\cdot)\|_{L^\infty}^2}+2\sqrt{2}\|K(t,\cdot)\|_{H^1}.
\end{equation}    
Similarly, given the assumptions {\rm (i)}-{\rm (iv)} we can also verify the continuity of $B(t)$ (in the uniform operator topology). Hence applying Theorems \ref{thm:1} and \ref{thm:2} the proof is completed immediately.
\end{proof}

\section{Application to the partial integro-differential equation}

In this section, we will reveal the relation between the Cauchy problem \eqref{equ:1} and the PIDE \eqref{equ:4} and apply the previous results to the latter problem. Throughout this section, $h(t,x)$ is determined by
\begin{equation} \label{equ:9.8}
  h(t,x) = \int_{-\infty}^{+\infty} \frac{ K(t,x-x')}{x'} \mathrm{d} x'.
\end{equation} 
For simplicity, we will also use the notation $\Psi[K]u=K*u$ and $h=K*(1/x)$ in the following. To improve understanding, we state the core theorem first.

\begin{theorem} \label{thm:5}
Assume $\forall t\in [0,T]$, $K(t,\cdot)\in L^2(\mathbb{R})$ is Lipschitz continuous. If the PIDE \eqref{equ:4} has a solution $u$ in terms of $u(t,x)=c(t)/x+v(t,x)$, where $v(t,\cdot)\in L^2(\mathbb{R})$ for all $t\in [0,T]$, then $(c(t),v(t,x))$ is also a solution of the Cauchy problem \eqref{equ:1}. Conversely, if $(c(t),v(t,x))$ is the solution of \eqref{equ:1} and $v(t,\cdot)\in L^2(\mathbb{R})$, $\forall t\in [0,T]$, then $u(t,x)=c(t)/x+v(t,x)$ is a solution of the PIDE \eqref{equ:4}.
\end{theorem}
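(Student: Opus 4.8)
The plan is to prove the equivalence by substituting the ansatz $u(t,x) = c(t)/x + v(t,x)$ directly into the PIDE \eqref{equ:4} and decomposing both sides into a component proportional to $1/x$ and a component lying in $L^2(\mathbb{R})$. The crucial structural fact is that this splitting is \emph{unique}: since $1/x\notin L^2(\mathbb{R})$, any identity $\alpha/x + f = \beta/x + g$ with $f,g\in L^2(\mathbb{R})$ and $\alpha,\beta\in\mathbb{R}$ forces $\alpha=\beta$ and $f=g$ (otherwise $(\alpha-\beta)/x = g-f$ would exhibit $1/x$ as an $L^2$ function). This rigidity is exactly what allows one to read off the two evolution equations of \eqref{equ:1} from the single scalar equation \eqref{equ:4}.

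First I would carry out the core algebraic computation on the right-hand side. Writing $K*u = c\,(K*\tfrac1x) + K*v$ and invoking the definition \eqref{equ:9.8}, i.e. $K*\tfrac1x = h$, gives $\tfrac1x(K*u) = \tfrac{c\,h(t,x)}{x} + \tfrac1x(K*v)(t,x)$. I would then split each term by subtracting and adding its value at $x=0$: the identity $c\,h(t,x)/x = c\,h(t,0)/x + c\,[h(t,x)-h(t,0)]/x$ separates $B_1$ from $B_2$, while $(K*v)(t,0)=\int K(t,-x')v(t,x')\,\mathrm{d}x' = \langle\widetilde{K}(t,\cdot),v\rangle$ together with \eqref{equ:9} separates $B_3$ from $B_4$. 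Collecting terms yields the key decomposition
\[
\frac{1}{x}(K*u) = \frac{1}{x}\bigl[B_1(t)c + B_3(t)v\bigr] + \bigl[B_2(t)c + B_4(t)v\bigr],
\]
in which the bracketed scalar is precisely the right-hand side of the $c$-equation and the bracketed $L^2$ function is precisely the right-hand side of the $v$-equation in \eqref{equ:1}. Before this is legitimate I must check that every object is well defined under the hypotheses: that $h(t,x)=K*(1/x)$ exists (a principal-value integral, finite at infinity by Cauchy--Schwarz against $1/x'$ on $|x'|>1$ and finite near the origin by the Lipschitz assumption on $K$), and that $(K*v)(t,\cdot)$ is a bounded continuous function so that its value at $0$ is meaningful (immediate from $K,v\in L^2$ via Cauchy--Schwarz).

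With this decomposition in hand, both directions follow quickly. For the converse I would set $u=c/x+v$ for a solution $(c,v)$ of \eqref{equ:1}, compute $\partial_t u = c'/x + \partial_t v$, substitute $c' = B_1 c + B_3 v$ and $\partial_t v = B_2 c + B_4 v$, and recognise the outcome as exactly $\tfrac1x(K*u)$ via the displayed identity --- a direct verification. For the forward direction I would feed $u=c/x+v$ into \eqref{equ:4}, obtaining $c'/x + \partial_t v = \tfrac1x[B_1 c + B_3 v] + [B_2 c + B_4 v]$, and then invoke the uniqueness of the singular/regular splitting to equate the coefficient of $1/x$ and the $L^2$ remainder separately, recovering the two equations of \eqref{equ:1}.

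I expect the main obstacle to be the rigorous justification of the singular/regular splitting in the time-dependent setting, namely arguing that the time derivative of the singular part of $u$ equals the singular part of $\partial_t u$, so that matching coefficients is valid for each $t$; this rests on the uniqueness of the decomposition together with linearity, and on confirming that $\partial_t v$ genuinely remains in $L^2(\mathbb{R})$ --- which the $v$-equation itself supplies, since its right-hand side $B_2 c + B_4 v$ lies in $L^2$ by Lemmas \ref{lem:1} and \ref{lem:2}. A secondary technical point is the well-definedness and regularity of $h=K*(1/x)$ as a principal-value integral, since the weaker hypotheses of Theorem \ref{thm:5} (only $K(t,\cdot)\in L^2$ and Lipschitz) must still render both $h(t,0)$ and the quotient $[h(t,x)-h(t,0)]/x$ meaningful.
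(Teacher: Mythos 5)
Your proposal is correct and follows essentially the same route as the paper: substitute $u=c/x+v$ into \eqref{equ:4}, identify $K*(1/x)=h$ (well defined by Lemma~\ref{lem:3}) and $K*v$ (finite by Cauchy--Schwarz/Young), and separate the resulting identity into its singular and $L^2$ parts to recover the two equations of \eqref{equ:1}, with the converse obtained by reversing the computation. The only difference is in how the separation is justified: where you invoke uniqueness of the decomposition $\alpha/x+f$ with $f\in L^2(\mathbb{R})$, the paper multiplies the equation by $x$ and evaluates at $x=0$ to extract the $c$-equation, then substitutes it back to obtain the $v$-equation --- your uniqueness argument is arguably the cleaner justification of that same step.
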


To begin with, we define the space
\begin{equation}
  \nonumber
  \mathcal{L}^2(\mathbb{R})=\left\{ f(x)=\frac{c}{x}+v(x):c\in\mathbb{R},v(\cdot)\in L^2(\mathbb{R}) \right\},
\end{equation}
which can be viewed as an one-dimensional extension of $L^2(\mathbb{R})$ since $1/x\notin L^2(\mathbb{R})$. We can thus equate $\mathcal{L}^2(\mathbb{R})$ with $X$ through an isomorphism:
\begin{equation}
\nonumber
  \sigma:\mathcal{L}^2(\mathbb{R})\to X,\quad f(x)=\frac{c}{x}+v(x)\mapsto (c,v(\cdot)).
\end{equation}
Accordingly, $\mathcal{L}^2(\mathbb{R})$ is a Banach space with the norm $\|f\|_{\mathcal{L}^2(\mathbb{R})}=\max(|c|,\|v\|_{L^2})$.

In the following, we will study the PIDE \eqref{equ:4} in the space $\mathcal{L}^2(\mathbb{R})$. Namely, we look for solutions in form of $u(t,\cdot)\in \mathcal{L}^2(\mathbb{R})$, i.e.,
\begin{equation} 
\nonumber
  u(t,x)=\frac{c(t)}{x}+v(t,x),\quad v(t,\cdot)\in L^2(\mathbb{R}),
\end{equation}
for which the convolution $\Psi[K]u=K*u$ is in the sense of the Cauchy principal integration,
\begin{equation} 
\nonumber
  \begin{split}
    K*u & := \lim_{\epsilon\to 0^+}\left[\int_{\epsilon}^{1/\epsilon}K(t,x-x')u(x')\mathrm{d} x'+\int_{-1/\epsilon}^{-\epsilon}K(t,x-x')u(x')\mathrm{d} x'\right] \\
    & = \int_0^{+\infty}\left [K(t,x-x')u(x')-K(t,x+x')u(-x')\right ]\mathrm{d} x'.
  \end{split}
\end{equation}
Particularly, we have
\begin{equation} \label{equ:11}
  \begin{split}
    h=K*\frac{1}{x} & = \int_0^{+\infty}\frac{K(t,x-x')-K(t,x+x')}{x'}\mathrm{d} x'.
  \end{split}
\end{equation}
We have the following lemmas concerning the properties of such integrals.

\begin{lemma} \label{lem:3}
Suppose $K(t,\cdot)\in L^2(\mathbb{R})$ is Lipschitz
continuous with a Lipschitz constant $k_1(t)$. Then
$h(t,\cdot)=K(t,\cdot)*(1/x)$ is well defined and
$$\|h(t,\cdot)\|_{L^\infty} \leq
2\left(k_1(t)+\|K(t,\cdot)\|_{L^2}\right).$$
\end{lemma}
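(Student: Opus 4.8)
The plan is to work directly with the Cauchy-principal-value representation \eqref{equ:11}, namely $h(t,x) = \int_0^{+\infty} \frac{K(t,x-x')-K(t,x+x')}{x'}\dd x'$, and to settle well-definedness and the $L^\infty$ bound simultaneously by showing that this integrand is absolutely integrable on $(0,+\infty)$ with integral bounded uniformly in $x$. Fixing $t$ and abbreviating $K=K(t,\cdot)$ and $k_1=k_1(t)$, I would split the integral at $x'=1$ into a near part over $(0,1)$ and a tail over $(1,+\infty)$, handling the two ranges by different mechanisms.

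For the near part, the singularity of $1/x'$ at the origin is exactly cancelled by the symmetric difference in the numerator. The Lipschitz hypothesis gives $|K(x-x')-K(x+x')|\le k_1\,|(x-x')-(x+x')| = 2k_1 x'$, so the integrand is dominated by $2k_1$ throughout $(0,1)$; hence $\int_0^1|\cdots|\dd x' \le 2k_1$ and, in particular, the integral is absolutely convergent near $x'=0$. This is precisely what makes the principal value meaningful.

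For the tail I would separate $K(x-x')-K(x+x')$ into its two summands and estimate each by the Cauchy--Schwarz inequality against the weight $1/x'$, which is square-integrable on $(1,+\infty)$ with $\int_1^{+\infty}(x')^{-2}\dd x' = 1$. Since a change of variables gives $\int_1^{+\infty}|K(x\mp x')|^2\dd x' \le \|K\|_{L^2}^2$, each summand contributes at most $\|K\|_{L^2}$, so the tail is absolutely convergent and bounded by $2\|K\|_{L^2}$. Adding the two estimates yields $|h(t,x)| \le 2k_1 + 2\|K\|_{L^2}$ for every $x$, which is the claimed bound on $\|h(t,\cdot)\|_{L^\infty}$.

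The computations are routine; the only conceptual point — and the step I would be most careful about — is the justification that the principal value exists, i.e.\ that the $\epsilon\to 0^+$ limit in the definition of $K*(1/x)$ coincides with the absolutely convergent integral \eqref{equ:11}. I would record the change of variables $x'\mapsto -x'$ that folds the two symmetric pieces of the principal value into the single integrand $(K(x-x')-K(x+x'))/x'$, after which the absolute convergence established above makes the limit automatic. The Lipschitz assumption enters exactly here, guaranteeing that the numerator vanishes linearly as $x'\to 0^+$ and thereby taming the otherwise non-integrable $1/x'$ singularity.
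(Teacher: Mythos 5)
Your proposal is correct and follows essentially the same route as the paper: split the folded principal-value integral at $x'=1$, use the Lipschitz bound $|K(t,x-x')-K(t,x+x')|\le 2k_1(t)x'$ on the near part, and apply Cauchy--Schwarz against $1/x'$ on the tail to get $2(k_1(t)+\|K(t,\cdot)\|_{L^2})$. Your extra remark justifying that the $\epsilon\to 0^+$ limit coincides with the absolutely convergent folded integral is a point the paper leaves implicit, but the argument is the same.
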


\begin{proof}
According to the Lipschitz continuity,
\begin{equation}
  \nonumber
  \begin{split}
    |h(t,x)|&=\left |\int_0^{+\infty}\frac{K(t,x-x')-K(t,x+x')}{x'}\mathrm{d} x'\right | \\
    &\leq\left (\int_{0}^1+\int_{1}^{+\infty}\right )\left |\frac{K(t,x-x')-K(t,x+x')}{x'}\right |\mathrm{d} x' \\
    &\leq \int_{0}^1|2k_1(t)|\mathrm{d} x'+\int_{1}^{+\infty}\left |\frac{K(t,x-x')}{x'}\right |\mathrm{d} x'+\int_{1}^{+\infty}\left |\frac{K(t,x+x')}{x'}\right |\mathrm{d} x'.
  \end{split}
\end{equation}
Using the Cauchy-Schwarz inequality, we obtain
\begin{equation}
  \nonumber
  \begin{split}
    \int_{1}^{+\infty}\left |\frac{K(t,x-x')}{x'}\right |\mathrm{d} x'&\leq \left (\int_{1}^{+\infty} |K(t,x-x')|^2\mathrm{d} x'\right )^{\frac{1}{2}}\cdot\left (\int_{1}^{+\infty}\left |\frac{1}{x'}\right |^2\mathrm{d} x'\right )^{\frac{1}{2}} \\
    &\leq \|K(t,\cdot)\|_{L^2}.
  \end{split}
\end{equation}
Similarly, 
\begin{equation}
  \nonumber
  \int_{1}^{+\infty}\left |\frac{K(t,x+x')}{x'}\right |\mathrm{d} x'\leq \|K(t,\cdot)\|_{L^2}.
\end{equation}
Therefore the improper integral \eqref{equ:11} converges and
\begin{equation}
  \nonumber
  |h(t,\cdot)|\leq 2(k_1(t)+\|K(t,\cdot)\|_2).
\end{equation}
.
\end{proof}

\begin{lemma} \label{lem:4}
Suppose $D^1K(t,\cdot)\in L^2(\mathbb{R})$ ($D^1$ denotes the first-order weak derivative with respect to the variable $x$ as in the previous section) is Lipschitz
continuous with a Lipschitz constant $k_2(t)$. Then
$h(t,\cdot)=K(t,\cdot)*(1/x)$ is Lipschitz continuous with the Lipschitz
constant $2(k_2(t)+\|D^1K(t,\cdot)\|_2)$. 
\end{lemma}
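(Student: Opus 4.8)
The plan is to reduce the statement to Lemma \ref{lem:3} applied with $D^1K$ in place of $K$. The heuristic is that differentiating the principal-value representation \eqref{equ:11} of $h$ in the variable $x$ should give $\partial_x h(t,\cdot)=D^1K(t,\cdot)*(1/x)$; since $D^1K(t,\cdot)\in L^2(\mathbb{R})$ is Lipschitz with constant $k_2(t)$ by hypothesis, Lemma \ref{lem:3} would then bound this derivative, uniformly in $x$, by $2(k_2(t)+\|D^1K(t,\cdot)\|_{L^2})$, which is precisely the claimed Lipschitz constant. Rather than justify differentiation under a singular integral, I would instead estimate the increment $h(t,x_2)-h(t,x_1)$ directly, in the spirit of the proof of Lemma \ref{lem:2}.

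Concretely, fix $x_1<x_2$. Subtracting the two copies of \eqref{equ:11} and applying the fundamental theorem of calculus to $K(t,\cdot)$, i.e. $K(t,x_2\mp x')-K(t,x_1\mp x')=\int_{x_1}^{x_2}D^1K(t,\bar x\mp x')\,\mathrm{d}\bar x$, and then interchanging the order of integration, I obtain
\begin{equation}
\nonumber
h(t,x_2)-h(t,x_1)=\int_{x_1}^{x_2}\left[\int_0^{+\infty}\frac{D^1K(t,\bar x-x')-D^1K(t,\bar x+x')}{x'}\mathrm{d} x'\right]\mathrm{d}\bar x.
\end{equation}
The inner integral is exactly $\left(D^1K(t,\cdot)*(1/x)\right)(\bar x)$, so Lemma \ref{lem:3}, now invoked with $K$ replaced by $D^1K$ and $k_1$ by $k_2$, bounds it in absolute value by $2(k_2(t)+\|D^1K(t,\cdot)\|_{L^2})$ for every $\bar x$. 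Pulling the absolute value inside the $\bar x$-integral then gives $|h(t,x_2)-h(t,x_1)|\le 2(k_2(t)+\|D^1K(t,\cdot)\|_{L^2})\,|x_2-x_1|$, which is the assertion.

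The step demanding the most care is the interchange of the $x'$-integration with the $\bar x$-integration, because of the singular factor $1/x'$. I would justify it by Tonelli's theorem: the proof of Lemma \ref{lem:3} in fact controls the integral of the \emph{absolute value} of the kernel, the part near $x'=0$ by Lipschitz continuity (the numerator is $O(x')$, so the quotient stays bounded by $2k_2$ on $(0,1)$) and the tail over $(1,+\infty)$ by Cauchy--Schwarz since $1/x'\in L^2((1,+\infty))$, and this bound $2(k_2+\|D^1K\|_{L^2})$ is independent of the base point. Consequently
\begin{equation}
\nonumber
\int_{x_1}^{x_2}\int_0^{+\infty}\left|\frac{D^1K(t,\bar x-x')-D^1K(t,\bar x+x')}{x'}\right|\mathrm{d} x'\,\mathrm{d}\bar x\le 2(k_2+\|D^1K\|_{L^2})\,|x_2-x_1|<\infty,
\end{equation}
so the integrand is absolutely integrable on the rectangle $[x_1,x_2]\times(0,+\infty)$ and Fubini's theorem legitimizes the swap. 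No further difficulty arises, as the remaining manipulations are the fundamental-theorem-of-calculus and norm estimates already carried out in Lemmas \ref{lem:2} and \ref{lem:3}.
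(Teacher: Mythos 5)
Your proposal is correct and follows essentially the same route as the paper's proof: apply the fundamental theorem of calculus to the difference $K(t,x_2\mp x')-K(t,x_1\mp x')$, swap the order of integration, recognize the inner integral as $D^1K*(1/x)$, and invoke Lemma \ref{lem:3} with $K$ replaced by $D^1K$. Your explicit Tonelli justification of the interchange is a welcome addition that the paper leaves implicit.
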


\begin{proof}
Let $V_K(t,x,x')=K(t,x-x')-K(t,x+x')$ and $D^1V_K(t,x,x')$, $D^1K(t,x-x')$ be the first-order weak derivative with respect to the (second) variable $x$. Since $D^1K(t,\cdot)\in L^2(\mathbb{R})\subseteq L^1_{\mathrm{loc}}(\mathbb{R})$, we have $D^1V_K(t,\cdot,x')\in L^1_{\mathrm{loc}}(\mathbb{R})$ for any fixed $x'$ and thus $V_K(t,\cdot,x')$ is absolutely continuous, i.e., for $\forall x_1 < x_2$,
\begin{equation}
  \nonumber
  \begin{split}
  V_K(t,x_2,x')-V_K(t,x_1,x') & =\int_{x_1}^{x_2} D^1V_K(t,\bar{x},x')\mathrm{d} \bar{x}.
  \end{split}
\end{equation}
Therefore,
\begin{equation}
  \nonumber
  \begin{split}
    |h(t,x_2)-h(t,x_1)| & =\left |\int_0^{+\infty}\frac{V_K(t,x_2,x')-V_K(t,x_1,x')}{x'}\mathrm{d} x'\right | \\
    & =\left |\int_0^{+\infty}\left [\int_{x_1}^{x_2} D^1V_K(t,\bar{x},x')\mathrm{d} \bar{x}\right ]\frac{1}{x'}\mathrm{d} x'\right | \\
    & \leq\int_{x_1}^{x_2}\left [\int_0^{+\infty}\left | D^1V_K(t,\bar{x},x') \right |\left |\frac{1}{x'}\right |\mathrm{d} x'\right ]\mathrm{d} \bar{x} \\
    & =\int_{x_1}^{x_2}\left [\int_0^{+\infty}\left |\frac{D^1K(t,x-x')-D^1K(t,x+x')}{x'}\right |\mathrm{d} x'\right ]\mathrm{d} \bar{x} \\
    & =\int_{x_1}^{x_2}\left |\left(D^1K*\frac{1}{x}\right)(t,\bar{x})\right |\mathrm{d} \bar{x}.
  \end{split}
\end{equation}
Finally, replacing $K$ by $D^1K$ in Lemma \ref{lem:3}, we obtain
\begin{equation}
  \nonumber
  \begin{split}
    |h(t,x_2)-h(t,x_1)| & \leq \int_{x_1}^{x_2}\left [2(k_2(t)+\|D^1K(t,\cdot)\|_2)\right ]\mathrm{d} \bar{x} \\
    & =2(k_2(t)+\|D^1K(t,\cdot)\|_2)|x_2-x_1|.
  \end{split}
\end{equation}
\end{proof}

\begin{proof}[Proof of Theorem \ref{thm:5}]
Assume $u(t,\cdot)\in \mathcal{L}^2(\mathbb{R})$, $t\in[0,T]$, is a solution of the PIDE \eqref{equ:4}. Substituting it into
\eqref{equ:4}, we obtain
\begin{equation}  \label{equ:13}
  \frac{\mathrm{d}c(t)}{\mathrm{d}t}+x\frac{\partial v(t,x)}{\partial t}=c(t)\left(K(t,x)*\frac{1}{x}\right)+(K*v)(t,x).
\end{equation}
According to Lemma \ref{lem:3}, $h(t,x)=K(t,x)*(1/x)$ is well defined on $[0,T]\times\mathbb{R}$ and using Young's inequality,
\begin{equation}
  \nonumber
  \|(K*v)(t,\cdot)\|_{L^\infty}\leq \|K(t,\cdot)\|_{L^2}\|v(t,\cdot)\|_{L^2}.
\end{equation}
Thus $K*v$ is also well defined. Let $x=0$ in both sides of \eqref{equ:13} and note that $h=K*(1/x)$. We have
\begin{equation}  \label{equ:14}
  \frac{\mathrm{d}c(t)}{\mathrm{d}t}=c(t)h(t,0)+\int_{-\infty}^{+\infty} K(t,-x)v(x)\mathrm{d} x. 
\end{equation}
In order to describe the evolution of $v(t,x)$, we substitute \eqref{equ:14} into \eqref{equ:13} and obtain
\begin{equation}  \label{equ:15}
  \begin{split}
    \frac{\partial v(t,x)}{\partial t} & =\frac{1}{x}\left[c(t)h(t,x)+(K*v)(t,x)-\frac{\mathrm{d}c(t)}{\mathrm{d}t}\right] \\
    & =\frac{1}{x}\left[c(t)h(t,x)+(K*v)(t,x)-c(t)h(t,0)-\int_{-\infty}^{+\infty} K(t,-x)v(x)\mathrm{d} x\right] \\
    & = c(t)\frac{h(t,x)-h(t,0)}{x}+\int_{-\infty}^{+\infty}\frac{K(t,x-x')-K(t,-x')}{x}v(t,x')\mathrm{d} x'.
  \end{split}
\end{equation}
Apparently, the evolution system composed of \eqref{equ:14} and \eqref{equ:15} is exactly the Cauchy problem \eqref{equ:1}. The inverse proposition can also be verified by simply reversing the proof above.
\end{proof}

Theorem \ref{thm:4} reveals that the Cauchy problem \eqref{equ:1} and the PIDE \eqref{equ:4} are equivalent if $h$ is defined by \eqref{equ:9.8}. Hence, the well-posedness of the PIDE \eqref{equ:4} can be investigated as an application of the previous section (see Theorem \ref{thm:4}). In order to correspond to the assumptions in Theorem \ref{thm:4}, we shall propose some extra conditions on $K$.
\vspace{3mm}
\newline
{\rm (v)} Both $\partial K(t,\cdot)/\partial t$ and $D^1(\partial K(t,\cdot)/\partial t)$ are uniformly Lipschitz continuous and uniformly bounded in $L^2(\mathbb{R})$ for $t\in[0,T]$. Namely, there exists a constant $M$ independent of $t$, such that for every $t\in[0,T]$, the corresponding Lipschitz constants and $L^2$ norms are all less than $M$. 

\begin{theorem} \label{thm:6}
Let the convolution kernel $K$ satisfy {\rm (iii)}-{\rm (v)}. Moreover, let $K(t,\cdot)\in H^1(\mathbb{R})$ and $K(t,\cdot)$, $D^1K(t,\cdot)$ be Lipschitz continuous and $k_1(t)$, $k_2(t)$ are the minimal Lipschitz constants, respectively. Then the PIDE \eqref{equ:4} has a unique classical solution $u\in C([0,T];\mathcal{L}^2(\mathbb{R}))$ and 
\begin{equation}
\nonumber
  \|u(t,\cdot)\|_{\mathcal{L}^2(\mathbb{R}))}\leq\|u_0\|_{\mathcal{L}^2(\mathbb{R}))}\cdot \exp\left(\int_0^t 2\sqrt{2}\left(k_2(s)+2k_1(s)+3\|K(s,\cdot)\|_{H^1}\right)\mathrm{d} s\right).  
\end{equation}
\end{theorem}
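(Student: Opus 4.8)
The plan is to read off Theorem~\ref{thm:6} from the well-posedness of the abstract Cauchy problem (Theorem~\ref{thm:4}) by transporting everything through the isomorphism $\sigma\colon\mathcal{L}^2(\mathbb{R})\to X$. Since $\sigma$ is norm-preserving we have $\|u(t,\cdot)\|_{\mathcal{L}^2(\mathbb{R})}=\|(c(t),v(t,\cdot))\|_X$ and likewise at $t=0$, while Theorem~\ref{thm:5} identifies the solutions of the PIDE \eqref{equ:4} sought in the form $u=c/x+v$ with the solutions of the Cauchy problem \eqref{equ:1} whenever $h$ is fixed by \eqref{equ:9.8}, i.e.\ $h=K*(1/x)$. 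Thus existence, uniqueness in $C([0,T];\mathcal{L}^2(\mathbb{R}))$, and the exponential bound all descend from Theorem~\ref{thm:4}, provided I verify that the coefficient pair $(h,K)$ produced by $h=K*(1/x)$ meets every hypothesis of that theorem.

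The bulk of the work is this verification. First I would record the spatial regularity of $h$: Lemma~\ref{lem:3}, applied to $K(t,\cdot)\in L^2(\mathbb{R})$ with Lipschitz constant $k_1(t)$, gives that $h(t,\cdot)$ is well defined with $\|h(t,\cdot)\|_{L^\infty}\le 2(k_1(t)+\|K(t,\cdot)\|_{L^2})$, and Lemma~\ref{lem:4}, applied to $D^1K(t,\cdot)$ with Lipschitz constant $k_2(t)$, gives that $h(t,\cdot)$ is Lipschitz with constant $L(t)\le 2(k_2(t)+\|D^1K(t,\cdot)\|_{L^2})$. Condition {\rm (iii)} is assumed outright, and {\rm (iv)} follows from {\rm (v)}, since $\|K'(t,\cdot)\|_{L^2},\|D^1K'(t,\cdot)\|_{L^2}\le M$ force $\|K'(t,\cdot)\|_{H^1}\le\sqrt{2}M$. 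The conditions {\rm (i)} and {\rm (ii)}, which concern $h$ rather than $K$, carry the real content: I would establish that $h(\cdot,x)$ is absolutely continuous with $h'(t,x)=\bigl(K'(t,\cdot)*(1/x)\bigr)(x)$, and then invoke Lemmas~\ref{lem:3} and~\ref{lem:4} \emph{with $K$ replaced by $K'$} to bound both $\|h'(t,\cdot)\|_{L^\infty}$ and the Lipschitz constant of $h'(t,\cdot)$ by $4M$, uniformly in $t$; here {\rm (v)} is used precisely because it controls $K'$ \emph{and} $D^1K'$ simultaneously.

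With all hypotheses of Theorem~\ref{thm:4} in force, the growth estimate is obtained by substituting the two bounds of the previous paragraph into the integrand $\sqrt{2L(s)^2+8\|h(s,\cdot)\|_{L^\infty}^2}+2\sqrt{2}\|K(s,\cdot)\|_{H^1}$, using the elementary inequality $\sqrt{a^2+b^2}\le a+b$ together with the embeddings $\|K(s,\cdot)\|_{L^2}\le\|K(s,\cdot)\|_{H^1}$ and $\|D^1K(s,\cdot)\|_{L^2}\le\|K(s,\cdot)\|_{H^1}$, and then collecting like terms to arrive at the integrand displayed in the statement. This final step is pure bookkeeping.

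The hard part is not the algebra but the justification of {\rm (i)}--{\rm (ii)} for $h$: differentiating the \emph{singular} convolution $h=K*(1/x)$ in $t$ and commuting $\partial_t$ past the Cauchy-principal-value integral in $x'$ to obtain $h'=K'*(1/x)$. I would handle this using the principal-value representation \eqref{equ:11} of $h$, moving the $t$-integral inside by the absolute continuity {\rm (iii)} of $K(\cdot,x')$, and dominating the resulting integrand uniformly in $t$ by the $L^2$ and Lipschitz control of $K'$ and $D^1K'$ from {\rm (v)}; concretely, the very estimates that establish convergence in Lemma~\ref{lem:3} for $K$ furnish, when applied to $K'$, a $t$-uniform dominating function, which licenses the interchange. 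Once the interchange is secured, the uniform bounds required by {\rm (ii)} are immediate, and Theorems~\ref{thm:4} and~\ref{thm:5} together close the argument.
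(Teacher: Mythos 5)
Your proposal is correct and follows essentially the same route as the paper's own proof: reduce to Theorem~\ref{thm:4} via Theorem~\ref{thm:5}, verify $\|h(t,\cdot)\|_{L^\infty}$ and $L(t)$ via Lemmas~\ref{lem:3} and~\ref{lem:4}, establish {\rm (i)}--{\rm (ii)} by showing $h'(t,x)=K'(t,x)*(1/x)$ through an interchange of the $t$- and $x'$-integrals and then applying the same two lemmas with $K$ replaced by $K'$, and finish with the $\sqrt{a^2+b^2}\le a+b$ bookkeeping. Your added remarks (that {\rm (iv)} already follows from {\rm (v)}, and that the dominating estimates of Lemma~\ref{lem:3} applied to $K'$ justify the interchange) are correct refinements of steps the paper carries out more tersely.
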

\begin{proof}
Applying Lemmas \ref{lem:3} and \ref{lem:4}, we know that
\begin{equation} \label{equ:16}
  \|h(t,\cdot)\|_{L^\infty}\leq 2(k_1(t)+\|K(t,\cdot)\|_{L^2}),\quad \forall t\in[0,T],
\end{equation} 
and $h(t,\cdot)$ is Lipschitz continuous with the minimal Lipschitz constant
\begin{equation} \label{equ:17}
  L(t)\leq  2(k_2(t)+\|D^1K(t,\cdot)\|_{L^2}),\quad \forall t\in[0,T].
\end{equation}

In addition, replacing $K(t,x)$ by $K'(t,x)$ in Lemmas \ref{lem:3} and \ref{lem:4} and applying {\rm (v)}, we know that $[K'*(1/x)](t,\cdot)$ is well defined, uniformly bounded in $L^\infty(\mathbb{R})$ and uniformly Lipschitz continuous for $t\in[0,T]$. Meanwhile, since $K(\cdot,x)$ is absolutely continuous, for any $0\leq t_1<t_2\leq T$ and $x\in \mathbb{R}$, we have
\begin{equation}
  \nonumber
  \begin{split}
    h(t_2,x)- & h(t_1,x) \\ 
    & =\int_0^{+\infty}\left [\frac{K(t_2,x+x')-K(t_2,x-x')}{x'}-\frac{K(t_1,x+x')-K(t_1,x-x')}{x'}\right ]\mathrm{d} x' \\
    & =\int_0^{+\infty}\left [\int_{t_1}^{t_2}K'(t,x+x')-K'(t,x-x') \mathrm{d} t\right ]\frac{1}{x'}\mathrm{d} x' \\
    & =\int_{t_1}^{t_2}\left [\int_0^{+\infty}\frac{K'(t,x+x')-K'(t,x-x')}{x'}\mathrm{d} x'\right ]\mathrm{d} t\\
    & =\int_{t_1}^{t_2}\left [K'(t,x)*\frac{1}{x}\right ]\mathrm{d} t.
  \end{split}
\end{equation}
Therefore, $h(\cdot,x)$ is absolutely continuous and
\begin{equation}
  \nonumber
  h'(t,x)=K'(t,x)*\frac{1}{x}.
\end{equation}
Conditions {\rm (i)} and {\rm (ii)} then follows according to the previous conclusions on $K'*(1/x)$. 

Now, all the conditions referring to $h$ and $K$ in Theorems \ref{thm:4} and \ref{thm:5} are fulfilled. Hence the PIDE \eqref{equ:4} has a unique classical solution $u\in C([0,T];\mathcal{L}^2(\mathbb{R}))$ and using \eqref{equ:16}, \eqref{equ:17} and the estimation in Theorem \ref{thm:4} we obtain
\begin{equation}
\nonumber
  \begin{split}
    \|u(t,\cdot)\|_{\mathcal{L}^2(\mathbb{R}))} & \leq \|u_0\|_{\mathcal{L}^2(\mathbb{R}))}\cdot \exp\left(\int_0^t\left[ \sqrt{2L(s)^2+8\|h(s,\cdot)\|_{L^\infty}^2}+2\sqrt{2}\|K(s,\cdot)\|_{H^1}\right]\mathrm{d} s\right) \\
    & \leq \|u_0\|_{\mathcal{L}^2(\mathbb{R}))}\cdot \exp\left(\int_0^t 2\sqrt{2}\left(k_2(s)+2k_1(s)+3\|K(s,\cdot)\|_{H^1}\right)\mathrm{d} s\right). 
  \end{split}
\end{equation}
The last inequality holds due to
\begin{equation}
\nonumber
  \begin{split}
    & \sqrt{2L(s)^2+8\|h(s,\cdot)\|_{L^\infty}^2}+2\sqrt{2}\|K(s,\cdot)\|_{H^1} \\
    = & 2\sqrt{2}\left(\sqrt{(k_2(s)+\|D^1K(s,\cdot)\|_{L^2})^2+(2k_1(s)+2\|K(s,\cdot)\|_{L^2})^2}+\|K(s,\cdot)\|_{H^1}\right) \\
    \leq & 2\sqrt{2}\left(k_2(s)+\|D^1K(s,\cdot)\|_{L^2}+2k_1(s)+2\|K(s,\cdot)\|_{L^2}+\|K(s,\cdot)\|_{H^1}\right) \\
    \leq & 2\sqrt{2}\left(k_2(s)+2k_1(s)+3\|K(s,\cdot)\|_{H^1}\right).
  \end{split}
\end{equation}
\end{proof}

We will conclude this section by an illustrative example regarding the stationary Wigner equation \eqref{eq:Wigner}. First we look at the initial value problem 
\begin{equation} \label{equ:18}
  \left\{
  \begin{aligned}
    & v \frac{\partial f(x,v)}{\partial x}=\int_{-\infty}^{+\infty} V_{w}(x,v-v')f(x,v')\mathrm{d} v',\\
    & f(0,v)=f_0(v).
  \end{aligned}
  \right.
\end{equation} 
Evidently, this is a special case of the PIDE \eqref{equ:4} (simply replacing the variables $(t,x)$ by $(x,v)$ in the discussion above). We put $f_0(v)=v\exp(-v^2/2)$ and $V(x)=\exp(-x^2/2)$. By the relation \eqref{eq:Vw} the corresponding convolution kernel $V_w$ can be expressed analytically, 
\begin{equation}
\nonumber
 V_w(x,v)=\sqrt{8/\pi}\exp(-2v^2)\sin(2vx).
\end{equation} 
One can thus easily see that $V_w$ satisfies the conditions required by Theorem \ref{thm:6}. Hence our theory indicates that the solution has the following form,
\begin{equation}
\nonumber
  f(x,v)=\frac{c(x)}{v}+f_0(x,v),\quad f_0(x,\cdot)\in L^2(\mathbb{R}).
\end{equation}
This can be further verified by a numerical experiment (see Figure \ref{fig:1}). It seems our theory handles the initial value problem \eqref{equ:18} perfectly.
\begin{figure}[htbp] 
  \centering
  \includegraphics[width=0.49\linewidth]{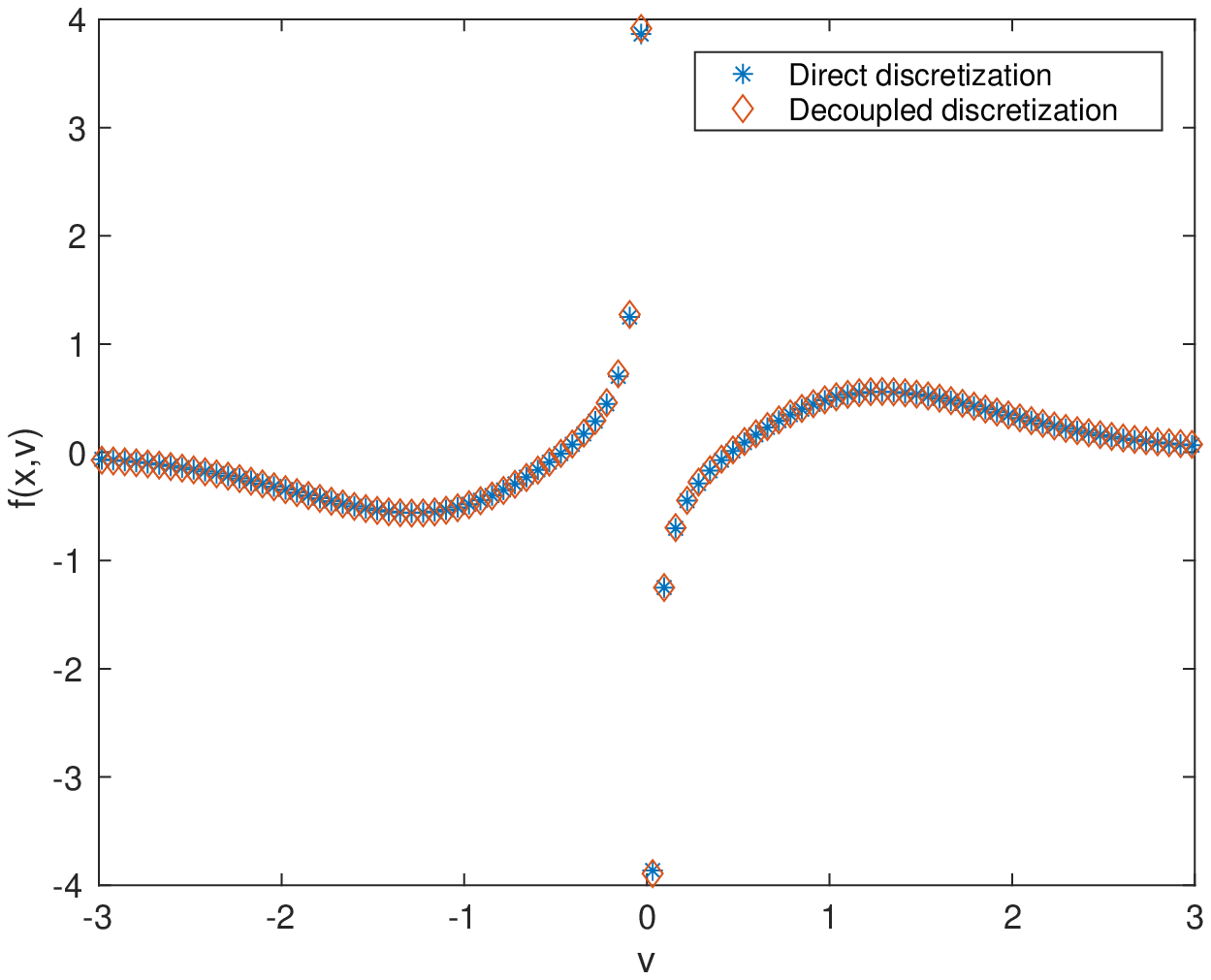}
  \includegraphics[width=0.49\linewidth]{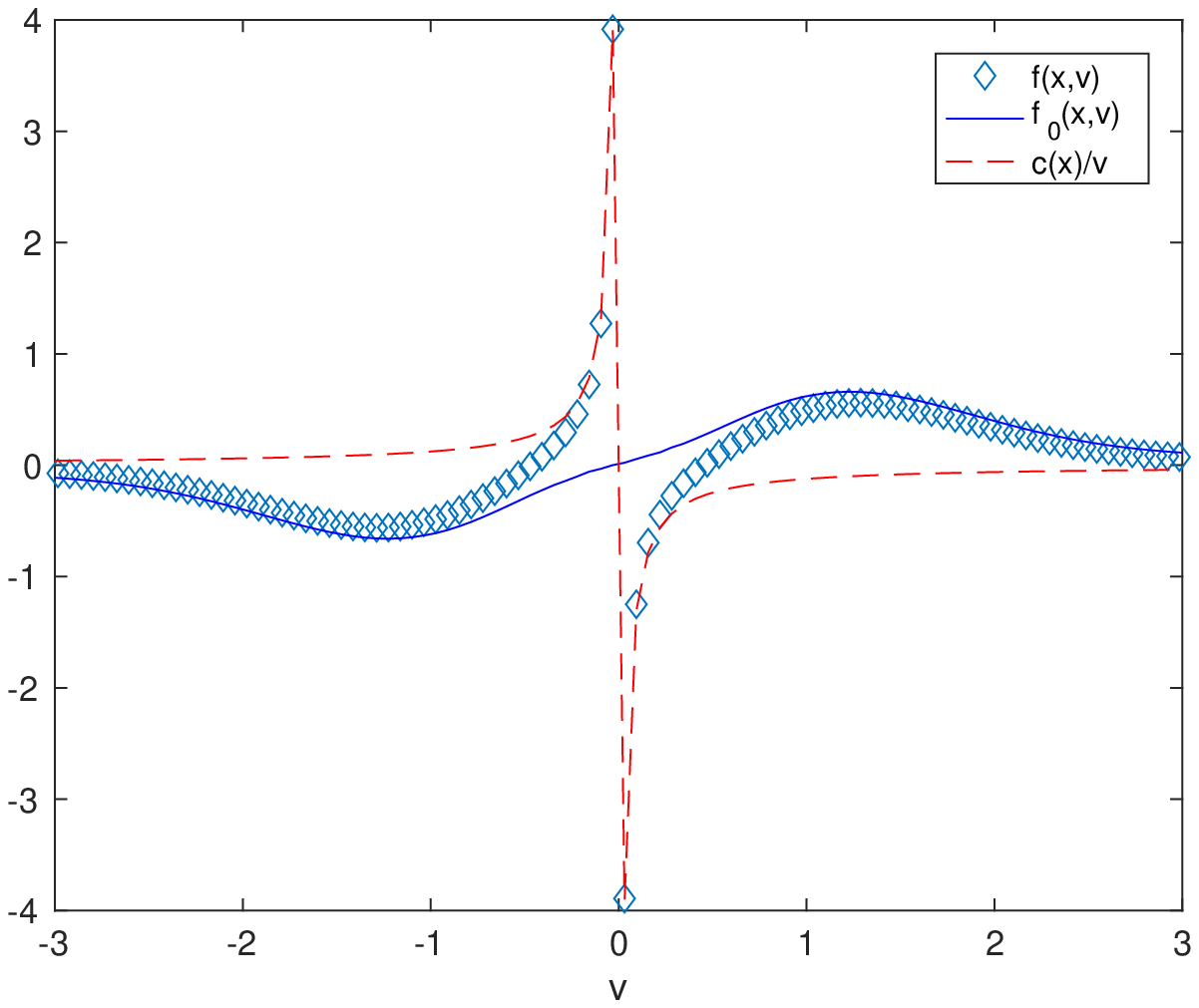}
  \caption{Numerical solutions of the IVP \eqref{equ:18} at $x=1$ using finite difference methods. The left picture displays the numerical result of $f(x,v)$ using a direct velocity discretization in \eqref{equ:18} combined with a RK4 (fourth order Runge-Kutta) method for the configuration propagation. Note that the zero velocity has been avoided (and should be avoided) in the direct discretization to ensure stability. Although the initial data is smooth enough, one can see that the singularity appears at the zero velocity while the solution propagates along the $x$-axis. On the right is the numerical solution of system \eqref{equ:1} with $h=V_w*(1/v)$, also using RK4 for the configuration propagation (one advantage here is that there is no need to avoid the zero velocity anymore). In this picture, however, the smooth part of the solution $f_0\in L^2(\mathbb{R})$ referred by the solid line and the singular part $c(x)/v$ referred by the dashed line can be further decoupled. One can see that the solution $f(x,v)$ approximates $c(x)/v$ near the zero velocity, while approximates $f_0$ in the region $|v|>1$. The numerical solution of \eqref{equ:1}, referred by the ``decoupled discretization'', is also compared to the direct discretization method of \eqref{equ:18} in the left picture.}
  \label{fig:1}
\end{figure}

The inflow boundary value problem \eqref{equ:6} can also be investigated through our theory absorbing the idea of parity decomposition. One significant aspect of the evolution operator $\Psi[V_w]/v$ ($\Psi[V_w]$ is the convolution operator as defined in \eqref{equ:4}) is the parity-preserving property (see e.g. \cite{Zweifel2001,Barletti2001,LiLuSun2017}), which states that an odd/even function $f_0(v)$ (see \eqref{equ:18}) will remain odd/even while propagating along the $x$-axis. This leads us to the parity decomposition
\begin{equation} \label{equ:19}
  f(x)=f_{o}(x)+f_{e}(x),\quad \forall f(x,\cdot)\in\mathcal{L}^2(\mathbb{R}),
\end{equation}
where $f_{o}$/$f_e$ is the odd/even part and we suppress the dependence on variable $v$ at times. Denote $\mathcal{L}^2_o(\mathbb{R})$/$\mathcal{L}^2_e(\mathbb{R})$ the set of odd/even functions in $\mathcal{L}^2(\mathbb{R})$. According to our theory and the parity-preserving property, the operator $\Psi[V_w]/v$ is a bounded operator on both $\mathcal{L}^2_o(\mathbb{R})$ and $\mathcal{L}^2_e(\mathbb{R})$. Moreover, since the singular part of $f(x)$ is always odd, we have $\mathcal{L}^2_e(\mathbb{R})$ equals to $L^2_e(\mathbb{R})$, the set of all even functions in $L^2(\mathbb{R})$. Therefore, we can define the solution operators $U_o(t,s)$ and $U_e(t,s)$, bounded on $\mathcal{L}^2_o(\mathbb{R})$ and $L^2_e(\mathbb{R})$ (see Theorem \ref{thm:2}) respectively, such that
\begin{equation} \label{equ:20}
  f_o(t)=U_o(t,s)f_o(s),\quad f_e(t)=U_e(t,s)f_e(s).
\end{equation}
To treat the inflow boundary value problem, for any real function $w(v)$, we define $w^+(v)$ the restriction of $w(v)$ on the half line $\mathbb{R}^+$ and $w^-(v)$ for the other half $\mathbb{R}^-$. We can thus write the inflow boundary conditions \eqref{equ:6} into
\begin{equation}
\nonumber
  f^+(0)=f_0^+,\quad f^-(1)=f_1^-,
\end{equation}
which, when substituting the decomposition \eqref{equ:19}, can be further interpreted by
\begin{equation} \label{equ:21}
  f^+_o(0)+f^+_e(0)=f_0^+,\quad -f^+_o(1)+f^+_e(1)=f_1^-.
\end{equation}
Meanwhile, since any odd/even function can be identified with its restriction on the half line $\mathbb{R}^+$, the solution operator $U_o(t,s)$/$U_e(t,s)$ induces a corresponding solution operator $U_o^+(t,s)$/$U_e^+(t,s)$ which, due to the fact proved on the former one, is also bounded in $\mathcal{L}^2_o(\mathbb{R}^+)$/$L^2_e(\mathbb{R}^+)$, where $\mathcal{L}^2_o(\mathbb{R}^+)$/$L^2_e(\mathbb{R}^+)$ is the $\mathcal{L}^2$/$L^2$ counterpart on the half line $\mathbb{R}^+$. Similar to \eqref{equ:20}, we have
\begin{equation} \label{equ:22}
  f_o^+(1)=U_o^+(1,0)f_o^+(0),\quad f_e^+(1)=U_e^+(1,0)f_e^+(0). 
\end{equation} 
Putting \eqref{equ:21} and \eqref{equ:22} together we obtain a closed system, for which the initial data can be solved formally,
\begin{equation}
\nonumber
  \begin{split}
    & f_o^+(0)=\left[U_o^+(1,0)+U_e^+(1,0)\right]^{-1}\left[U_e^+(1,0)f_0^+-f_1^-\right], \\
    & f_e^+(0)=\left[U_o^+(1,0)+U_e^+(1,0)\right]^{-1}\left[U_o^+(1,0)f_0^++f_1^-\right].
  \end{split}
\end{equation}
However, this is not the final answer to the inflow boundary value problem since we have no clue whether $\left[U_o^+(1,0)+U_e^+(1,0)\right]^{-1}$ is a bounded operator on a certain space (a similar result can be found in \cite{Barletti2001}).

\section{Conclusions}
We prove the well-posedness of the abstract Cauchy problem \eqref{equ:1} and extend our results to the study of the partial integro-differential Eq. \eqref{equ:4}, which is a generalized form of the stationary Wigner equation. Our theory reveals the equivalence of these two seemingly unrelated problems. Also, the results on the initial value problem \eqref{equ:4} shed lights on the inflow boundary value problem of the stationary Wigner equation, although a throughly solution of this problem definitely requires a further investigation. Another interesting question comes from the observation that if $c(t)\equiv 0$ for $t\in[0,T]$ (see Theorem \ref{thm:5}), then $u(t,x)$ is purely a solution of \eqref{equ:4} in $L^2(\mathbb{R})$. However, whether $L^2(\mathbb{R})$ is a proper solution space for the PIDE \eqref{equ:4} (or particularly the stationary Wigner equation \eqref{eq:Wigner} with initial or inflow boundary conditions) is not in the scope of this paper. These topics will be treated in future study.
\section*{Acknowledgments}
This research was supported in part by the NSFC (91434201, 91630130, 11671038, 11421101).


\end{document}